\newcommand{\limi}[1]{\;\dot{#1}\;}
\newtheorem{thm}{\it{Theorem}}
\newtheorem{Def}{\it{Definition}}
\title{On the Average Complexity of Sphere Decoding in Lattice Space-Time Coded MIMO Channel}
\author{
\authorblockN{Walid Abediseid}
\authorblockA{Physical Sciences and Engineering Division\\
King Abdullah University for Science and Technology (KAUST) \\
Thuwal, Saudi Arabia \\
walid.abediseid@kaust.edu.sa}
}
\begin{document}
\maketitle
\begin{abstract}
The exact average complexity analysis of the basic sphere decoder for general space-time codes applied to multiple-input multiple-output (MIMO) wireless channel is known to be difficult. In this work, we shed the light on the computational complexity of sphere decoding for the quasi-static, LAttice Space-Time (LAST) coded MIMO channel. Specifically, we drive an upper bound of the tail distribution of the decoder's computational complexity. We show that, when the computational complexity exceeds a certain limit, this upper bound becomes dominated by the outage probability achieved by LAST coding and sphere decoding schemes. We then calculate the minimum average computational complexity that is required by the decoder to achieve near optimal performance in terms of the system parameters. Our results indicate that there exists a \textit{cut-off} rate (multiplexing gain) for which the average complexity remains bounded.
\end{abstract}
\section{Introduction}
Since its introduction to multiple-input multiple-output (MIMO) wireless communication systems, the sphere decoder has become an attractive efficient implementation of the maximum-likelihood (ML) decoder, especially for small signal dimensions and/or moderate to large SNRs. Such a decoder allows for significant reduction in (average) decoding complexity as opposed to the ML decoder without sacrificing performance. In general, the sphere decoder \cite{HV}--\cite{SJ} is commonly used in communication systems that can be well-described by the following (real) \textit{linear Gaussian vector channel} model
\begin{equation}\label{linear_model}
\pmb{y}=\pmb{Mx}+\pmb{e},
\end{equation}
where $\pmb{x}\in\mathbb{R}^m$ is the input to the channel, $\pmb{y}\in\mathbb{R}^n$ is the output of the channel, $\pmb{e}\in\mathbb{R}^n$ is the additive Gaussian noise vector with entries that are independent identically distributed, zero-mean Gaussian random variables with variance $1/2$, and $\pmb{M}\in\mathbb{R}^{n\times m}$ is a matrix representing the channel linear mapping.

The input-output relation describing the channel that is given in (\ref{linear_model}) allows for the use of \textit{lattice theory} \cite{Conway} to analyze many digital communication systems. In this paper, we assume that $\pmb{x}$ is a codeword selected from a lattice code. Let $\Lambda_c\stackrel{\Delta}{=}\Lambda(\pmb{G})=\{\pmb{x}=\pmb{G}\pmb{z}:\pmb{z}\in\mathbb{Z}^m\}$ be a lattice in $\mathbb{R}^m$ where $\pmb{G}$ is an $m\times m$ full-rank lattice generator matrix. The Voronoi cell, $\mathcal{V}_{\pmb{x}}(\pmb{G})$, that corresponds to the lattice point $\pmb{x}\in\Lambda_c$ is the set of points in $\mathbb{R}^m$ closest to $\pmb{x}$ than to any other point $\pmb{\lambda}\in\Lambda_c$, with volume that is given by $V_c\triangleq {\rm Vol}(\mathcal{V}_{\pmb{x}}(\pmb{G}))=\sqrt{{\rm det}(\pmb{G}^{\mathsf{T}}\pmb{G})}$. An $m$-dimensional lattice code $\mathcal{C}(\Lambda_c,\pmb{u}_o,\mathcal{R})$ is the finite subset of the lattice translate $\Lambda_c+\pmb{u}_0$ inside the shaping region $\mathcal{R}$, i.e., $\mathcal{C}=\{\Lambda_c+\pmb{u}_0\}\cap\mathcal{R}$, where $\mathcal{R}$ is a bounded measurable region\footnote{In this paper, we consider a shaping region $\mathcal{R}$ that corresponds to the Voronoi cell $\mathcal{V}_s$ of a sublattice $\Lambda_s$ of $\Lambda_c$, i.e., $\Lambda_s\subseteq \Lambda_c$. The generated codes are called nested (or Voronoi) lattice codes (see \cite{EZ2} for more details).} of $\mathbb{R}^m$.

Space-time codes based on lattices have been used in MIMO channels due to their low encoding complexity (e.g., nested or Voronoi codes \cite{Conway}) and the capability of achieving excellent error performance (see \cite{GC}). Another important aspect of using lattice space-time (LAST) codes is that they can be decoded by a class of efficient decoders known as  \textit{lattice decoders}. These decoder algorithms reduce complexity by relaxing the code boundary constraint and find the point of the underlying (infinite) lattice closest to the received point, i.e.,
\begin{equation}\label{lattice_decoding}
\hat{\pmb{x}}={\rm arg}\min_{\pmb{x}\in{\Lambda_c}}\|\pmb{y}-\pmb{M}\pmb{x}\|^2.
\end{equation}
It is well-known that sphere decoding based on Fincke-Pohst and Schnorr-Euchner enumerations are efficient strategies to solve (\ref{lattice_decoding}) and have been widely used for signal detection in MIMO systems of small dimensions (see \cite{Label4} and references therein). In this work, and for the sake of simplicity, we consider the Fincke-Pohst sphere decoder without radius reduction and restarting (see \cite{Label4} for more details).

\subsection{Previous Work}
Up to date, most of the work on the complexity of sphere decoding focused on characterizing the mean and the variance of the decoder's complexity, particularly for the \textit{uncoded} MIMO channel (e.g., V-BLAST) \cite{HV}--\cite{JB}. Seethaler \textit{et. al.} \cite{SJ} considered the derivation of the computational distribution of the sphere decoder for the $M\times N$ uncoded MIMO channel. Characterizing and understanding the complexity distribution is important, especially when the sphere decoder is used under practically relevant runtime constraints. The computational tail distribution is defined as $\Pr(C\geq L)$, where $C$ is the overall decoding complexity, and $L$ is the distribution parameter. It has been shown in \cite{SJ} that. as $L\rightarrow\infty$, the computational tail distribution follows a Pareto-type with tail exponent given by $N-M+1$, i.e.,
$$\Pr(C\geq L)=L^{-(N-M+1)},\quad L\rightarrow\infty.$$

However, the main drawback of their work is that they consider the decoder's complexity analysis when the number of computations performed by the decoder increases without bound. In other words, although the behaviour of the tail distribution is characterized, they do not specify a value of $L$ to indicate when the computations become excessive. In fact, when the decoder is used under practically runtime constraint, it is sometimes desirable to allow the decoder to terminate the search once a limit is exceeded and declare an error. 
%Moreover, the \textit{exact} average complexity of the sphere decoder when applied to the uncoded MIMO channel was not studied.

Achieving higher diversity and multiplexing gains require incorporating error control coding (across antenna and time) at the transmitter. Several works have considered the computational complexity analysis of optimal and sub-optimal decoders applied to LAST coded $M\times N$ MIMO channels \cite{Label4}--\cite{JElia}. A first attempt toward specifying the exact complexity required by the decoder to achieve the optimal diversity-multiplexing tradeoff (DMT) \cite{ZT} of the quasi-static LAST coded MIMO channel was considered in \cite{EJ}. It was shown that the optimal tradeoff can be achieved using lattice reduction (LR) aided linear decoders at a worst-case complexity $O(\log\rho)$, where $\rho$ is the SNR. This corresponds to a linear increase in complexity as a function of the code rate $R$ at the high SNR regime, where $R=r\log\rho$ with $r\leq\min\{M,N\}$ referred to as the multiplexing gain of the coding scheme. However, this very low decoding complexity comes at the expense of a large gap from the sphere decoder's error performance. In order to close this gap, lattice sequential decoding algorithms \cite{MGDC} are considered efficient decoders that achieve good error performance with much lower decoding complexity. In \cite{WD}, we have analysed in details the computational tail distribution and the average complexity of the lattice sequential decoder. Specifically, we have shown that when the computational complexity exceeds a certain limit, the tail distribution becomes upper bounded by the outage probability achieved by LAST coding and sequential decoding schemes, i.e.,
$$\Pr(C\geq L)\leq \rho^{-d_{\rm out}(r)},\quad L\geq L_0,$$
where $d_{\rm out}(r)$ is the DMT achieved by the underlying coding and the decoding schemes. This interesting result indicates that one may save on decoding complexity while still achieving near-outage performance by setting a \textit{time-out} limit at the decoder so that when the computational complexity exceeds the limit, the decoder terminates the search and declare an error. In this work, and for the sake of completeness, we study the complexity behaviour of the optimal sphere decoder in the quasi-static MIMO channel. This would allow us to compare the complexity of the sphere decoder with other low complexity decoders and see whether it is worth sacrificing performance for complexity achieved by such decoders.

The work in \cite{JElia} considers sphere decoding algorithms that describe exact ML decoding. Such algorithms take into consideration the coding boundary constraint to achieve the ML performance, while in this paper we consider sphere decoding algorithms that perform lattice decoding. To be specific, it has been shown in \cite{JElia} that the total number of computations required by the sphere decoder to achieve a vanishing gap from the ML performance is given asymptotically (at high SNR) by $\rho^{c(r)}$, where $c(r)={Tr}(1-{r/ M})$, $\forall 0\leq r\leq M$, has been defined as the sphere decoder's SNR complexity exponent. Compared to the exhaustive ML decoder's complexity $\rho^{Tr}$, there is a $(1-r/M)$ reduction factor in the complexity SNR exponent achieved by the sphere decoder without sacrificing the performance. However, two important aspects about the complexity behaviour of the sphere decoder have not been considered in \cite{JElia}. First, at multiplexing gain $r=0$, i.e., at fixed coding rates $R$, one cannot realize the complexity saving advantage achieved by the sphere decoder using the above definition of the computational complexity. This is very important, since most of the experiments on the complexity of various sphere decoding algorithms are performed using channel codes with fixed rates. Moreover, the comparisons between several decoders applied to the outage-limited MIMO channel are usually performed experimentally through their corresponding average decoding complexity, a topic that was not considered in \cite{JElia}.

 \subsection{Main Contribution}
 The main contribution in this work focuses on the complexity tail distribution and the (exact) average computational complexity of the sphere decoder for the LAST coded MIMO channel. Specifically, we consider the analysis of the decoder when minimum mean-square error decision-feedback equalization (MMSE-DFE) is applied. We derive the asymptotic average complexity of the decoder in terms of the system parameters: the SNR $\rho$, the number of transmit antennas $M$, the number of receive antennas $N$, and the codeword length $T$. For this type of decoding, we specify the required system parameters that are needed to achieve the corresponding DMT with fairly low decoding complexity. In general, it is shown that the sphere decoder has much lower \textit{asymptotic} complexity than the exhaustive ML decoder. Moreover, we show that there exists a \textit{cut-off} rate (multiplexing gain) for which the average complexity remains bounded. Simulation results are used to verify our theoretical analysis and claims.

Throughout the paper, we use the following notation. The superscript $^c$ denotes complex quantities, $^\mathsf{T}$ denotes transpose, and $^\mathsf{H}$ denotes Hermitian transpose. We refer to $g(z)\limi{=}z^a$ as $\lim_{z\rightarrow\infty}g(z)/\log(z)=a$, $\dot{\geq}$ and $\dot{\leq}$ are used similarly. For a bounded Jordan-measurable region $\mathcal{R}\subset\mathbb{R}^m$, $V(\mathcal{R})$ denotes the volume of $\mathcal{R}$.  We denote $\mathcal{S}_{\pmb{x}}^m(r)$ by the $m$-dimensional hypersphere of radius $r$ centred at $\pmb{x}$ with $V(\mathcal{S}_{\pmb{x}}^m(r))=(\pi r^2)^{m/2}/\Gamma(m/2+1)$, and $\pmb{I}_m$ denotes the $m\times m$ identity matrix.  The notation $\pmb{v}\sim\mathcal{N}(\pmb{\mu},\pmb{K})$ indicates that $\pmb{v}$ is a real Gaussian random vector with mean $\pmb{\mu}$ and covariance matrix~$\pmb{K}$. The complement of a set $\mathcal{A}$ is denoted by $\overline{\mathcal{A}}$.

\section{LAST Coding and Lattice Decoding}
We consider a quasi-static, Rayleigh fading MIMO channel with $M$-transmit, $N$-receive antennas, and no channel state information (CSI) at the transmitter and perfect CSI at the receiver. The complex base-band model of the received signal can be mathematically described by (for $T$ channel uses)
\begin{equation}
\pmb{Y}^c=\sqrt{\rho}\pmb{H}^c\pmb{X}^c+\pmb{W}^c,
\label{channel}
\end{equation}
where $\pmb{X}^c\in\mathbb{C}^{M\times T}$ is the transmitted space-time code matrix, $\pmb{Y}^c\in\mathbb{C}^{N\times T}$ is the received signal matrix, $\pmb{W}^c\in\mathbb{C}^{N\times T}$ is the noise matrix, $\pmb{H}^c\in\mathbb{C}^{N\times M}$ is the channel matrix, and $\rho=\mathsf{SNR}/M$ is the normalized SNR at each receive antenna with respect to $M$. The elements of both the noise matrix and the channel fading gain matrix are assumed to be independent identically distributed (i.i.d.) zero mean circularly symmetric complex Gaussian random variables with variance $\sigma^2=1$.

An $M\times T$ space-time coding scheme is a full-dimensional LAttice Space-Time (LAST) code if its vectorized (real) codebook (corresponding to the channel model (\ref{linear_model})) is a lattice code with dimension $m=2MT$. As discussed in \cite{GC}, the design of space-time signals reduces to the construction of a codebook $\mathcal{C}\subseteq\mathbb{R}^{2MT}$ with code rate $R={1\over T}\log|\mathcal{C}|$, satisfying the input averaging power constraint
$${1\over|\mathcal{C}|}\sum_{\pmb{x}\in\mathcal{C}}\|\pmb{x}\|^2\leq MT.$$

When lattice decoding is pre-processed by MMSE-DFE filtering\footnote{Here, we perform the QR-decomposition on the \textit{augmented} channel matrix
$$\tilde{\pmb{H}}=\begin{pmatrix} \pmb{H}\cr \pmb{I}\end{pmatrix}=\tilde{\pmb{Q}}\pmb{R},$$
where $\pmb{H}$ is real-valued equivalent channel gain matrix, $\tilde{\pmb{Q}}\in\mathbb{R}^{(n+m)\times m}$ has orthonormal columns, and $\pmb{R}\in\mathbb{R}^{m\times m}$ is an upper triangular with positive diagonal elements. If we let $\pmb{Q}=\pmb{H}\pmb{R}^{-1}$ the upper $n\times m$ part of $\pmb{Q}$, then the matrices $\pmb{F}=\pmb{Q}^\mathsf{T}$ and $\pmb{B}=\pmb{R}$ are called the MMSE-DFE \textit{forward} and \textit{backward} filters, respectively. At the receiver, the received signal, $\pmb{y}$, is multiplied by the forward filter matrix $\pmb{F}$ of the MMSE-DFE to get $\pmb{y}=\pmb{B}\pmb{x}+\pmb{e}$. This is equivalent to (\ref{linear_model}) with $\pmb{M}=\pmb{B}$ where $\pmb{B}$ has the property that $\det(\pmb{B}^\mathsf{T}\pmb{B})=\left[\det\left(\pmb{I}_M+{\rho}(\pmb{H}^c)^\mathsf{H}\pmb{H}^c\right)\right]^{2T}$ (refer to \cite{GC} for more details about this topic).}, the equivalent real model of the above channel can be easily shown to be given by (\ref{linear_model}) with $\pmb{M}$ that satisfies (see \cite{GC} for more details)
\begin{equation}\label{MMSE_Decoding}
\det(\pmb{M}^\mathsf{T}\pmb{M})=\left[\det\left(\pmb{I}_M+{\rho}(\pmb{H}^c)^\mathsf{H}\pmb{H}^c\right)\right]^{2T},
\end{equation}

\begin{Def}
 Consider a family of LAST codes $\mathcal{C}_{\rho}$ for fixed $M$ and $T$, obtained from lattices of a given dimension $m=2MT$ and indexed by their operating SNR $\rho$. The code $\mathcal{C}_{\rho}$ has rate $R(\rho)$, average error probability $P_e(\rho)$, and decoding computational complexity $\Pr(C\geq L)$ (averaged over the random channel matrix $\pmb{H}^c$). The multiplexing gain, diversity order, and complexity tail exponent are defined respectively as
 \begin{equation*}
 r=\lim_{\rho\rightarrow\infty}{R(\rho)\over\log{\rho}},\quad d=\lim_{\rho\rightarrow\infty}{-\log P_e(\rho)\over\log{\rho}},\quad\eta=\lim_{\rho\rightarrow\infty}{-\log \Pr(C\geq L)\over\log \rho}.
\end{equation*}
\end{Def}
It has been shown in \cite{GC} that LAST coding and lattice decoding can achieve rates up to
\begin{equation}\label{achievable}
R_{\rm LAST}(\rho,\pmb{H}^c)=\log\det(\pmb{M}^\mathsf{T}\pmb{M})^{1/2T}=\log\det\left(\pmb{I}_M+{\rho}(\pmb{H}^c)^\mathsf{H}\pmb{H}^c\right).
\end{equation}
For the underlying quasi-static MIMO channel, it is well-known that the asymptotic error performance, $P_e(\rho)$, of any coding and decoding schemes is dominated by the \textit{outage probability}, $P_{\rm out}(\rho,R)$, i.e., $P_e(\rho)\limi{=}P_{\rm out}(\rho,R)$. For LAST coding and lattice decoding schemes, the outage probability is defined by
\begin{equation}
P_{\rm out}(\rho,R)=\Pr(R\geq R_{\rm LAST}(\rho,\pmb{H}_c))\limi{=}\rho^{-d^*_{\rm out}(r)},
\end{equation}
where $d^*_{\rm out}(r)= (M-r)(N-r)$, $\forall\;r\in[0,\min\{M,N\})$, is defined as the \textit{optimal} diversity-multiplexing tradeoff (DMT) achieved by the channel \cite{ZT}.

Define the outage event $\mathcal{O}(\rho)=\{\pmb{H}^c:R(\rho)\geq R_{\rm LAST}(\rho,\pmb{H}^c)\}$, and denote the transmission rate $R(\rho)=r\log\rho$. Let $0\leq\lambda_1\leq\cdots\leq\lambda_{M}$ be the ordered eigenvalues of $(\pmb{H}^c)^\mathsf{H}{\pmb{H}^c}$, and define $\pmb{\alpha}=(\alpha_1,\cdots,\alpha_M)$, where $\alpha_i\triangleq-\log\lambda_i/\log\rho$. As discussed in \cite{ZT}, at high SNR, the non-negative values of $\pmb{\alpha}$ only contributes to the outage event. Therefore, the outage event can be expressed as
\begin{equation}\label{outage_MMSE}
\mathcal{O}\limi{=}\left\{\pmb{\alpha}\in\mathbb{R}_+^{M}:\; \sum_{i=1}^{M}(1-\alpha_i)^+<r\right\}.
\end{equation}

In what follows, we summarize the results derived in \cite{GC}. For the  lattice decoding, there exists a sequence of full-dimensional LAST codes that achieves DMT 
\begin{equation}\label{MMSE-DFE}
d_{\rm out}^*(r)=(M-r)(N-r),\quad \forall r\in [0,\min\{M,N\}],
\end{equation}
under the constraint $T\geq M+N-1$ (see \cite{GC} for more details).

\section{Lattice Decoding via Sphere Decoding}
While ML decoding performs exhaustive search over all codewords $\pmb{c}\in\mathcal{C}(\Lambda_c,\mathcal{R})$, sphere decoding algorithms find the closest lattice point $\pmb{x}\in\Lambda_c$ to the received signal $\pmb{y}$ within a sphere radius $R_s$ centred at the received signal. It is well-known that the sphere decoder allows for significant reduction in decoding complexity for small signal
dimensions and average to large values of SNR. Depending on whether the sphere decoder incorporate the boundaries of the lattice code (i.e., $\mathcal{R}$) into the search algorithm or not, one can achieve ML or \textit{near}-ML performance. Here, we consider sphere decoding algorithms that describe lattice decoding, i.e., the class of decoding algorithms that do not take into account the shaping region $\mathcal{R}$.

In general, the sphere decoder, after QR decomposition of the channel-code matrix $\pmb{MG}=\pmb{QR}$, finds all integer lattice points $\pmb{z}\in\mathbb{Z}^m$ that satisfy the sphere constraint
\begin{equation}\label{yy}
\|\pmb{y}'-\pmb{Rz}\|^2=\sum\limits_{k=1}^m\|{\pmb{y}'}_1^k-\pmb{R}_{kk}\pmb{z}_1^k\|^2\leq R_s^2,
\end{equation}
where $\pmb{y}'=\pmb{Q}^\mathsf{T}\pmb{y}$, $\pmb{Q}$ is an orthogonal matrix, and $\pmb{R}$ is an $m\times m$ upper triangular matrix with positive diagonal elements. Moreover, $\pmb{z}_1^k=[z_k,\cdots,z_2,z_1]^{\mathsf{T}}$ denotes the last $k$ components of the integer vector $\pmb{z}$, $\pmb{R}_{kk}$ is the lower $k\times k$ part of the matrix $\pmb{R}$, ${\pmb{y}'}_1^k$ is the last $k$ components of the vector ${\pmb{y}'}$. The structure of $\pmb{R}$ allows one to perform backward sequential search from layer (dimension) $m$ to layer $1$. Several algorithms were developed to efficiently perform the search (see for example \cite{Label4}). Once all points are listed, one can find the point that is closest in distance to $\pmb{y}$.

It is more convenient to look at the sphere decoder as a search in a \textit{tree} with $m$ layers. The $k$-th layer, where $1\leq k\leq m$, contains nodes that correspond to the partial integer lattice points $\pmb{z}_1^k\in\mathbb{Z}^k$. In this case, one may define the computational complexity of the sphere decoder as the total number of nodes that have been visited (or extended) by the decoder during the search.

Define the indicator function $\phi(\pmb{z}_1^k)$ by
\begin{equation}\label{phi}
\phi(\pmb{z}_1^k)=\begin{cases} 1, &\text{if $\pmb{z}_1^k$ is extended;}\cr
                                                          0, &\text{otherwise.}\end{cases}
                                                          \end{equation}
Then, the total number of partial integer lattice points $\pmb{z}_1^k\in\mathbb{Z}^k$ found by the decoder at layer $k$ can be expressed as
\begin{equation}\label{Ck}
C_k=\sum\limits_{\pmb{z}_1^k\in\mathbb{Z}^k}\phi(\pmb{z}_1^k).
\end{equation}
Therefore, the total computational complexity of the sphere decoder $C$ that is required to find the closest lattice point to the received signal is given by $C=\sum_{k=1}^m C_k$.

\subsection{Sphere Radius Selection}
The selection of the initial radius $R_s$ at the beginning of the search is of crucial importance in the computational complexity analysis. Choosing a small value of $R_s$ may result in finding no lattice points inside the sphere (i.e., $C_k=0$ for some $1\leq k\leq m$). On the other hand, choosing a very large value of $R_s$ results in finding too many lattice points inside the sphere that leads to very large computational complexity. As such, the sphere radius $R_s$ must be chosen sufficiently large for the search sphere to contain at least one lattice point.

Selecting $R_s=r_{\rm cov}(\pmb{MG})$, i.e., the covering radius\footnote{The covering radius $r_{\rm cov}(\pmb{G})$ of a lattice $\Lambda(\pmb{G})$ is the radius of the smallest sphere centred at the origin that contains $\mathcal{V}_{\pmb{0}}(\pmb{G})$.} of the lattice generated by $\pmb{MG}$, guarantees the existing of at least one lattice point inside the sphere. Unfortunately, the computation of $r_{\rm cov}$ for a general lattice is very difficult. Another choice of $R_s$ is the distance between the Babai estimate and the vector $\pmb{y}$. As mentioned in \cite{HV}, although this choice guarantees the existence of at least one lattice point (the Babai estimate) inside the sphere, it not clear in general whether it leads to too many lattice points inside the sphere.

In this work, we follow a different approach to find a sphere radius that guarantees the existing of at least one lattice point inside the sphere. This particular choice of the sphere radius is shown to simplify the analysis of deriving an upper bound on the decoder's computational complexity. The basic idea of this approach (as will be shown in the sequel) is to separate the typical noise events from the non-typical ones.  This allows the separation of the ``typical'' lattice points (lattice points that are highly likely to be generated by the sphere decoder) from the atypical ones.

\subsection{The $k$-th Layer Complexity}
In this section, we would like to provide some insight about the computational complexity of the sphere decoder at the $k$-th layer (more details can be found in \cite{SJ}). This may assist us in the derivation of the upper bound on the computational complexity distribution as will be shown in the sequel.

As mentioned previously, the computational complexity of the sphere decoder at the $k$-th layer is determined by the total number of partial lattice points $\pmb{z}_1^k\in\mathbb{Z}^k$ that satisfy the $k$-th layer sphere constraint
$$\|{\pmb{y}'}_1^k-\pmb{R}_{kk}\pmb{z}_1^k\|\leq R_s.$$
We assume that $R_s$ is chosen sufficiently large enough so that at least one lattice point is found inside the sphere (details on how $R_s$ is selected will be introduced next). It is clear that the computational complexity of the decoder depends on the distributions of $\pmb{y}_1^k$ and $\pmb{R}_{kk}$. Since those two quantities are random, the computational complexity analysis of the sphere decoder is considered difficult.

Most of the work on the complexity of sphere decoding (see \cite{Label10}, \cite{SJ}, \cite{BK}), rely on approximating $C_k$ by
\begin{equation}\label{aprrox_bound2}
C_k\approx {V(\mathcal{S}_{\pmb{0}}^k(R_s))\over \det(\pmb{R}_{kk}^\mathsf{T}\pmb{R}_{kk})^{1/2}},
\end{equation}
where the approximation becomes more accurate for sufficiently large $R_s$. It should not be so surprising that the $k$-th layer complexity of the sphere decoder is inversely proportional to the volume of the Voronoi region of the lattice generated by the partial upper triangular matrix $\pmb{R}_{kk}$. Since $\pmb{R}_{kk}$ is related to the channel matrix $\pmb{H}^c$, it is to be expected that the computational complexity depends critically on the channel conditions, i.e., depends on whether the channel is \textit{ill} or \textit{well} conditioned. We are now ready to establish our upper bound on the decoder's complexity tail distribution.

\section{Computational Complexity: Tail Distribution in the High SNR Regime}
In this section, we consider a sphere radius $R_s^2=MT(1+\zeta\log\rho)$, where $\zeta>0$ is chosen sufficiently large enough to ensure the existence of at least one lattice point inside the search sphere. The reason for that choice will become evident as we further analyze the complexity of the decoder.

In this section,  we are interested in finding an upper bound to the tail distribution of the decoder's computational complexity at high SNR. This is summarized in the following theorem:
\begin{thm} The asymptotic computational complexity distribution of the MMSE-DFE sphere decoder in the $M\times N$ LAST coded MIMO channel with codeword length $T$, is upper bounded by
\begin{equation}\label{TDN}
\Pr(C\geq L)\limi{\leq} \rho^{-\eta(r)},
\end{equation}
under the condition that
\begin{equation}\label{L}
L\geq m+ V(\mathcal{S}^m_{\pmb{0}}(2R_s))\sum\limits_{k=1}^m{V(\mathcal{S}^k_{\pmb{0}}(R_s))\over \det(\pmb{R}_{kk}^\mathsf{T}\pmb{R}_{kk})^{1/2}},
\end{equation}
where the SNR exponent $\eta(r)=(M-r)(N-r)$ for $T\geq N+M-1$. The matrix $\pmb{R}_{kk}$ is the lower $k\times k$ part of $\pmb{R}=\pmb{Q}^\mathsf{T}\pmb{MG}$.
\end{thm}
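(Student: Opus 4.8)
The plan is to show that the complexity tail $\Pr(C\geq L)$ is, on the SNR scale, no larger than the outage probability $P_{\rm out}(\rho,R)\limi{=}\rho^{-(M-r)(N-r)}$ once $L$ reaches the limit in $(\ref{L})$. I would do this in three moves: (i) upgrade the heuristic $(\ref{aprrox_bound2})$ to a genuine deterministic, channel-conditional bound $C\le$ [right-hand side of $(\ref{L})$]; (ii) deduce that, under $(\ref{L})$, the event $\{C\geq L\}$ forces some partial Gram determinant $\det(\pmb{R}_{kk}^\mathsf{T}\pmb{R}_{kk})$ to be anomalously small; and (iii) bound the probability of that channel event by a large-deviations estimate over the eigenvalues $\lambda_i$ of $(\pmb{H}^c)^\mathsf{H}\pmb{H}^c$ and match it to the outage exponent $(M-r)(N-r)$, which by $(\ref{MMSE-DFE})$ equals the DMT $d^*_{\rm out}(r)$ when $T\ge N+M-1$.

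For (i), the key observation is that Fincke--Pohst enumeration without radius reduction visits the node $\pmb{z}_1^k$ \emph{exactly} when its partial metric $\|{\pmb{y}'}_1^k-\pmb{R}_{kk}\pmb{z}_1^k\|$ does not exceed $R_s$ --- partial metrics are non-decreasing along any root-to-leaf path, so no node meeting the constraint is ever pruned. Hence $C_k$ equals the number of points of the lattice $\Lambda(\pmb{R}_{kk})$ lying in $\mathcal{S}^k_{{\pmb{y}'}_1^k}(R_s)$. Packing the pairwise disjoint Voronoi cells of those points into $\mathcal{S}^k_{{\pmb{y}'}_1^k}(R_s)\oplus\mathcal{V}_{\pmb{0}}(\pmb{R}_{kk})$ gives $C_k\le V\bigl(\mathcal{S}^k_{\pmb{0}}(R_s)\oplus\mathcal{V}_{\pmb{0}}(\pmb{R}_{kk})\bigr)/\det(\pmb{R}_{kk}^\mathsf{T}\pmb{R}_{kk})^{1/2}$; I would then bound the Minkowski-sum volume by inflating the ball, absorbing the covering-radius inflation and the dimension-dependent geometric constants, very generously, into the single prefactor $V(\mathcal{S}^m_{\pmb{0}}(2R_s))$, and the trivial floor $C_k\ge 1$ into the additive $m$. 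Summing over $k$ reproduces precisely the right-hand side of $(\ref{L})$, so $C$ is always dominated by it.

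For (ii)--(iii) I would proceed as follows. By $(\ref{MMSE_Decoding})$ and $\det(\pmb{R}^\mathsf{T}\pmb{R})=\det(\pmb{G})^2\bigl[\det(\pmb{I}_M+\rho(\pmb{H}^c)^\mathsf{H}\pmb{H}^c)\bigr]^{2T}$ together with $\det(\pmb{G})^2\limi{=}\rho^{-2rT}$ (from the rate/power normalization), the full Gram determinant has SNR exponent $2T\bigl(\sum_i(1-\alpha_i)^+-r\bigr)$, which is non-negative exactly off the outage set $(\ref{outage_MMSE})$; because the MMSE--DFE triangular factor has squared diagonal entries bounded below, the partial determinants $\det(\pmb{R}_{kk}^\mathsf{T}\pmb{R}_{kk})=\prod r_{jj}^2$ are controlled on $\overline{\mathcal{O}}$ as well, so there the right-hand side of $(\ref{L})$ stays sub-polynomial in $\rho$. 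Therefore, once $L$ meets $(\ref{L})$ the inclusion $\{C\geq L\}\subseteq\mathcal{O}\cup\{\|\pmb{e}\|^2>R_s^2\}$ holds on the SNR scale --- the second set accounting for the atypical realizations on which the search sphere could be empty. A Chernoff bound handles this second set: writing the MMSE--DFE effective noise as $\pmb{e}=\pmb{F}\pmb{w}-\pmb{B}^{-\mathsf{T}}\pmb{x}$, the forward filter is a contraction and $\|\pmb{B}^{-1}\|\le1$ (as $\pmb{B}^\mathsf{T}\pmb{B}\succeq\pmb{I}$), while $\|\pmb{x}\|^2=O(MT)$ over the shaped codebook, so $\|\pmb{e}\|^2\le2\|\pmb{w}\|^2+O(MT)$ and with $R_s^2=MT(1+\zeta\log\rho)$ one gets $\Pr(\|\pmb{e}\|^2>R_s^2)\limi{\leq}\rho^{-cMT\zeta}$, negligible for $\zeta$ large. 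Combining, $\Pr(C\geq L)\limi{\leq}\Pr(\mathcal{O})+\Pr(\|\pmb{e}\|^2>R_s^2)\limi{=}\rho^{-(M-r)(N-r)}=\rho^{-\eta(r)}$, which is $(\ref{TDN})$; the hypothesis $T\ge N+M-1$ is exactly what makes $d^*_{\rm out}(r)=(M-r)(N-r)$ the governing outage exponent.

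The step I expect to be the real obstacle is the counting in (i): turning $(\ref{aprrox_bound2})$ into a rigorous upper bound uniformly over all realizations of the random upper-triangular matrix $\pmb{R}_{kk}$ --- in particular in the regime where $\Lambda(\pmb{R}_{kk})$ is very skewed, so its covering radius is large and naive ball inflation is wasteful --- while still losing no more than the crude factor $V(\mathcal{S}^m_{\pmb{0}}(2R_s))$. A secondary but genuinely delicate point is the large-deviations optimization in (iii): the ``small partial Gram determinant'' event must be matched to the outage region with the \emph{sharp} exponent $(M-r)(N-r)$ rather than a weaker one, and this hinges on the squared-diagonal lower bound of the MMSE--DFE triangular factor, which is what lets the partial --- not merely the full --- determinants be pinned down on $\overline{\mathcal{O}}$.
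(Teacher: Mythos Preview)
Your plan diverges from the paper's proof in a way that leaves a genuine gap, and the gap is exactly the step you yourself flag as ``the real obstacle.''

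In step (i) you want a \emph{deterministic}, realization-wise bound $C\le$ [right-hand side of $(\ref{L})$], obtained by Voronoi-packing the visited nodes at each layer and absorbing the covering-radius inflation of $\Lambda(\pmb{R}_{kk})$ into the single prefactor $V(\mathcal{S}^m_{\pmb{0}}(2R_s))$. This cannot be made to work: for a skewed realization of $\pmb{R}_{kk}$ the covering radius is not controlled by $R_s$ or by $\det(\pmb{R}_{kk}^\mathsf{T}\pmb{R}_{kk})$ alone, so no fixed geometric constant --- and certainly not $V(\mathcal{S}^m_{\pmb{0}}(2R_s))$ --- suffices uniformly. Once (i) fails, (ii) collapses, because you no longer have the inclusion $\{C\ge L\}\subseteq\mathcal{O}\cup\{\|\pmb{e}\|^2>R_s^2\}$.

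The paper does \emph{not} attempt a deterministic counting bound. Its mechanism is probabilistic in two separate places. First, after the outage/noise split (which you also have), it applies \emph{Markov's inequality} to $\tilde{C}=C-m$, so the threshold in $(\ref{L})$ enters as the denominator $L-m$ rather than as a hard ceiling on $C$. Second --- and this is the step entirely absent from your proposal --- the expectation $\mathsf{E}\{\tilde{C}\}$ is bounded by invoking the \emph{random nested lattice ensemble} (Loeliger's Minkowski--Hlawka averaging): after the triangle-inequality inclusion $\{\|\pmb{e}-\pmb{Mx}\|\le R_s,\ \|\pmb{e}\|\le R_s\}\subseteq\{\|\pmb{Mx}\|\le 2R_s\}$, the average over the code ensemble replaces $\sum_{\pmb{x}\in\Lambda_c^*}\mathbf{1}\{\|\pmb{Mx}\|\le 2R_s\}$ by the volume ratio $V(\mathcal{S}^m_{\pmb{0}}(2R_s))/\bigl(V_c\det(\pmb{M}^\mathsf{T}\pmb{M})^{1/2}\bigr)$. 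That is where the factor $V(\mathcal{S}^m_{\pmb{0}}(2R_s))$ actually originates; it is a Minkowski--Hlawka numerator, not a covering-radius inflation. The resulting bound $\rho^{-T[\sum_j(1-\alpha_j)^+-r]}$ is then integrated against the eigenvalue density over $\overline{\mathcal{O}}$ exactly as in your step (iii), yielding the exponent $d^*_{\rm out}(r)=(M-r)(N-r)$ for $T\ge M+N-1$.

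In short: your large-deviations endgame (iii) is on target, but the front end must be replaced by Markov's inequality plus the random-lattice averaging; the theorem is an existence statement over the LAST code ensemble, and the ensemble average is not optional.
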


\begin{proof}
We follow the same approach that is commonly used to upper bound the decoding error probability in the quasi-static MIMO channel \cite{ZT}. By separating the outage event from the non-outage event, we obtain:
\begin{equation}\label{UU1}
\begin{split}
\Pr(C\geq L)\leq \Pr(\pmb{\alpha}\in\mathcal{O})+\Pr(C\geq L,\pmb{\alpha}\in\overline{\mathcal{O}}).
\end{split}
\end{equation}
%	When the channel is not in outage, one can verify that the asymptotic squared effective radius of the channel-code matrix can be lower bounded by $r^2_{\rm eff}(\pmb{MG})\limi>MT(1+\zeta\log\rho)$, for some $\zeta>0$. Now

Let us concentrate on bounding the second term in the RHS of (\ref{UU1}). As mentioned in Section III.B, bounding this term is considered difficult in general. However, as will be shown in the sequel, the analysis may be simplified by separating the two events that correspond to the additive noise being located inside and outside a sphere of radius $R_s$. In this case, one can upper bound the second term in the RHS of (\ref{UU1}) as follows:
\begin{equation}\label{UB_C}
\begin{split}
\Pr&(C\geq L|\overline{\mathcal{O}})\leq\Pr(C\geq L, \|\pmb{e}\|^2\leq R_s^2|\overline{\mathcal{O}})+\Pr(\|\pmb{e}\|^2\geq R_s^2).
\end{split}
\end{equation}

It is clear from the above bound that the value of the sphere radius $R_s$ affects both terms in the RHS of (\ref{UB_C}). It very important to note that $R_s$ must be selected carefully, not only to ensure the existence of a lattice point inside the sphere, but to obtain a tail distribution that is upper bounded by the outage probability of the underlying decoding scheme. The intuition behind this will be discussed later in this section. Nevertheless, in order to appropriately select the sphere radius, we study the behaviour of some parameters that correspond to the channel-code lattice $\Lambda(\pmb{MG})$ when the channel is not in outage.

First, for nested lattice codes we have that (see~\cite{Label9})
$$|\mathcal{C}(\Lambda_c,\mathcal{R})|=2^{RT}=\rho^{rT}={V(\mathcal{R})\over V_c}.$$
When the channel is not in outage, one can verify that the effective radius of the channel-code matrix, $r_{\rm eff}(\pmb{MG})$\footnote{The radius of the sphere with volume equal to $\mathcal{V}_{\pmb{0}}(\pmb{MG})$, i.e., $r_{\rm eff}(\pmb{MG})=[V(\mathcal{V}_{\pmb{0}}(\pmb{MG}))/V(\mathcal{S}^m_{\pmb{0}}(1))]^{1/m}$.}, is asymptotically given by
\begin{equation}\label{r_eff}
r_{\rm eff}(\pmb{MG})=\left[{V_c\det(\pmb{M}^\mathsf{T}\pmb{M})^{1/2}\over V(\mathcal{S}^m_{\pmb{0}}(1))}\right]^{1/m}\limi{=} MT\left[\rho^{-rT}\det(\pmb{M}^\mathsf{T}\pmb{M})^{1/2}\right]^{1/m}\limi{=}MT\rho^{\gamma},
\end{equation}
with $\gamma=[\sum_{j=1}^{M}(1-\alpha_j)^+-r]/2M>0$, when the channel is not in outage.

It is clear from (\ref{r_eff}) that, as $\rho\rightarrow\infty$ the volume of the Voronoi region $\mathcal{V}_{\pmb{0}}(\pmb{MG})$ as well as $r_{\rm eff}(\pmb{MG})$ grow quickly with SNR as $\rho^\gamma$, where $\gamma>0$ when the channel is not in outage. Hence, the sphere radius is required to increase with SNR as well in order to ensure the existing of at least one lattice point inside the decoder's search sphere. However, choosing $R_s=r_{\rm eff}(\pmb{MG})\limi{=}\rho^{\gamma}$ results into too many points inside the sphere. Therefore, $R_s$ is required to grow with SNR at slower rate than $\rho^\gamma$. For that reason, we select the search radius to be $R_s=\sqrt{MT(1+\zeta\log\rho)}$, where $\zeta>0$ (asymptotically less than $\rho^\gamma$, for all $\zeta>0$) and show that for sufficiently large $\zeta$, such radius selection guarantees (with high probability) the existing of at least one lattice point inside the sphere. This can be seen from the fact that
\begin{equation}
\begin{split}
\Pr(\|\pmb{e}\|^2> MT(1+\zeta\log\rho))\limi{\leq} \rho^{-MT\zeta},
\end{split}
\end{equation}
where the inequality follows from applying Chernoff bound. Now, for sufficiently large $\zeta$, the above probability becomes negligible. In other words, asymptotically, one can expect that the received signal is highly likely to be located inside a sphere of square radius $R_s^2=MT(1+\zeta\log\rho)$.

Next, we consider bounding the first term in the RHS of (\ref{UB_C}) from above. By viewing the decoder as a search on a tree one can interpret $C$ as the total number of nodes that have been visited by the decoder. Therefore, assuming the all-zero codeword was transmitted, one can rewrite $C$ as $C=m+\tilde{C}$, where
$$\tilde{C}=\sum_{k=1}^m \sum_{\pmb{z}_1^k\in\mathbb{Z}^k\backslash\{\pmb{0}\}}\phi(\pmb{z}_1^k).$$

Now, let $\tilde{\phi}_k(\pmb{z})$ be the indicator function defined by
$$
\tilde{\phi}_k(\pmb{x})=\begin{cases} C_k', &\text{if $\|\pmb{e}-\pmb{M}\pmb{x}\|^2\leq R_s^2$;}\cr
                                                          0, &\text{otherwise,}\end{cases}
$$
where $C_k'$ is as defined in (\ref{aprrox_bound2}). Then, one can easily verify that
$$\tilde{C}\leq \sum_{k=1}^m C_k'\sum_{\pmb{x}\in\Lambda_c^*}\phi_k(\pmb{x}),$$
where $\Lambda_c^*=\Lambda_c\backslash\{\pmb{0}\}$.
For a given lattice $\Lambda_c$, using Markov inequality, we have
\begin{equation}\label{bb}
\Pr(C\geq L|\Lambda_c)=\Pr(\tilde{C}\geq L-m|\Lambda_c)\leq{\mathbb{E}_{\pmb{e}'}\{\tilde{C}|\Lambda_c\}\over L-m},
\end{equation}
for $L>m$. Taking the expectation of $\tilde{C}$ with respect to the noise, one can easily show that\footnote{At this point, we would like to remind the reader that for the case of MMSE-DFE lattice decoding, the additive noise vector is non-Gaussian for finite $T$. However, one can show (see \cite{GC} and \cite{EZ2}) that for a well-constructed lattice the probability density function of the noise vector $\pmb{e}$, $f_{\pmb{e}}(\pmb{\nu})\leq \beta_m f_{\tilde{\pmb{e}}}(\pmb{\nu})$, where $\tilde{\pmb{e}}\sim\mathcal{N}(\pmb{0},0.5\pmb{I})$, and $\beta_m$ is a constant (has no effect at high SNR).}
\begin{equation}\label{P_C}
\begin{split}
\Pr(&C\geq L,\|\pmb{e}\|^2\leq R_s^2|\Lambda_c,\overline{\mathcal{O}})\leq{\sum_{k=1}^m C_k'\over L-m} \sum\limits_{\pmb{x}\in\Lambda_c^*}\Pr(\|\pmb{e}-\pmb{M}\pmb{x}\|^2\leq R_s^2,\|\pmb{e}\|^2\leq R_s^2|\overline{\mathcal{O}})\\
&\stackrel{(a)}{\leq}{\sum_{k=1}^m C_k'\over L-m}\sum\limits_{\pmb{x}\in\Lambda_c^*}\Pr(\|\pmb{M}\pmb{x}\|^2\leq 4R_s^2|\overline{\mathcal{O}})={\sum_{k=1}^m C_k'\over L-m}\mathsf{E}_{\pmb{M}}\left\{\sum\limits_{\pmb{x}\in\Lambda_c^*}\mathbf{1}\{\|\pmb{M}\pmb{x}\|^2\leq 4R_s^2\}\biggl|\overline{\mathcal{O}}\right\},
\end{split}
\end{equation}
where $(a)$ follows from the fact that in general one can show that for any random vectors $\pmb{u}$ and $\pmb{v}$, and $R_s>0$, it holds $\{\|\pmb{u}-\pmb{v}\|^2\leq R_s^2,\|\pmb{v}\|^2\leq R_s^2\}\subseteq \{\|\pmb{v}\|^2\leq 4R_s^2\}$, and $\mathbf{1}\{\mathcal{A}\}$ denotes the indicator function of the event $\mathcal{A}$. By taking the expectation of (\ref{P_C}) over the ensemble of random lattices (see \cite{Label9}, Theorem 4)
\begin{equation}\label{Mink11}
\Pr(C\geq L,\|\pmb{e}\|^2\leq R_s^2|\overline{\mathcal{O}})\leq{\sum_{k=1}^m C_k'\over L-m}\mathsf{E}_{\pmb{M}}\left\{{V(\mathcal{S}^m_{\pmb{0}}(2R_s))\over V_c\det(\pmb{M}^\mathsf{T}\pmb{M})^{1/2}}\biggl|\overline{\mathcal{O}}\right\}=\mathsf{E}_{\pmb{M}}\left\{\rho^{-T[\sum_{j=1}^{M}(1-\alpha_j)^+-{r}]}|\overline{\mathcal{O}}\right\},
\end{equation}
for $L\geq m+V(\mathcal{S}^m_{\pmb{0}}(2R_s))\sum_{k=1}^m C_k'$. The last inequality follows from the fact that at high SNR we have $V_c\limi{=}\rho^{-rT}$, and $\det(\pmb{M}^\mathsf{T}\pmb{M})^{1/2}\limi{=}\rho^{T\sum_{j=1}^M(1-\alpha_j)^+}$. 

Following the footsteps of \cite{GC}, averaging (\ref{Mink11}) over the channels in $\overline{\mathcal{O}}$ set, we have
\begin{equation}\label{PEs1}
\Pr(C\geq L,|\pmb{e}|^2\leq R_s^2)\limi{\leq}\int_{\overline{\mathcal{O}}}f_{\pmb{\alpha}}(\pmb{\alpha})\Pr(C\geq L,|\pmb{e}|^2\leq R_s^2|\pmb{\alpha})\;d\pmb{\alpha}\limi{\leq}\rho^{-d^*_{\rm out}(r)},
\end{equation}
where $f_{\pmb{\alpha}}(\pmb{\alpha})$ is the joint probability density function of $\pmb{\alpha}$ which, for all $\pmb{\alpha}\in\overline{\mathcal{O}}$, is asymptotically given by (see \cite{GC})
$$f_{\pmb{\alpha}}(\pmb{\alpha})\limi{=}\exp(-\log(\rho)\sum_{i=1}^M(2i-1+|N-M|)\alpha_i),$$
and $d^*_{\rm out}(r)$ is the outage SNR exponent that is given in (\ref{MMSE-DFE}).

Similar to \cite{GC}, one can show that the behaviour of the first term in the RHS of (\ref{UU1}) at high SNR is also $\rho^{-d^*_{\rm out}(r)}$. Therefore, we finally have
\begin{equation}\label{upper}
\Pr(C\geq L)\limi{\leq}\rho^{-d^*_{\rm out}(r)},
\end{equation}
under the condition that
\begin{equation*}
L\geq m+ V(\mathcal{S}^m_{\pmb{0}}(2R_s))\sum\limits_{k=1}^m{V(\mathcal{S}^k_{\pmb{0}}(R_s))\over \det(\pmb{R}_{kk}^\mathsf{T}\pmb{R}_{kk})^{1/2}}.
\end{equation*}
\end{proof}

The above results reveal that, if the number of computations performed by the decoder exceeds
\begin{equation}\label{L0}
L_0= m+ V(\mathcal{S}^m_{\pmb{0}}(2R_s))\sum\limits_{k=1}^m{V(\mathcal{S}^k_{\pmb{0}}(R_s))\over \det(\pmb{R}_{kk}^\mathsf{T}\pmb{R}_{kk})^{1/2}},
\end{equation}
then the complexity distribution of the sphere decoder is upper bounded by its outage probability. As a result, one may save on decoding complexity while still achieving near-outage performance by setting a \textit{time-out} limit at the decoder so that when the computational complexity exceeds $L_0$ the decoder terminates the search. Such time-out limit does not affect the optimal tradeoff achieved by the modified decoding scheme. To see this, suppose that the sphere decoder imposes a time-out limit so that the search is terminated once the number of computations reaches $L_0$, and hence the decoder declares an error. Let $E_s$ be the event that the decoder makes an erroneous detection when $L\leq L_0$ (this event occurs when the received signal $\pmb{y}\notin\mathcal{V}_{\pmb{x}}(\pmb{MG})$, assuming $\pmb{x}$ was transmitted). In this case, the average error probability is given by
\begin{equation}\label{Pe_C}
P_e(\rho)=\Pr(E_s \cup \{C\geq L_0\})\leq \Pr(E_s) + \Pr(C\geq L_0)\limi{\leq} \rho^{-d_{\rm out}(r)}.%+\Pr(C\geq L_0).
\end{equation}

However, since $L_0$ is random, it would be interesting to calculate the (minimum) average number of computations required by the decoder to terminate the search.

\section{Average Sphere Decoding Complexity}
It is to be expected that when the channel is ill-conditioned (i.e., in outage) the computational complexity becomes extremely large. Moreover, when the channel is in outage it is highly likely that the decoder performs an erroneous  detection. However, when the channel is \textit{not} in outage, there is still a non-zero probability that the number of computations will become large (see (\ref{Mink11})). As such, it is sometimes desirable to terminate the search even when the channel is not in outage, especially when the sphere decoder is used under practically relevant runtime constraints. Therefore, we would like to determine the \textit{minimum} average number of computations that is required by the decoder to terminate the search and declare an error without affecting the achievable DMT.

This can be expressed as
\begin{eqnarray}\label{d2}
L_{\rm out}=\mathsf{E}\{L_0(\pmb{H}^c\in\overline{\mathcal{O}})\},
\end{eqnarray}
where $L_0(\pmb{H}^c\in\overline{\mathcal{O}})$ denotes the minimum number of computations performed by the decoder when the channel is not in outage which is given in (\ref{L0}). Before we do that, we would like first to study the asymptotic (at high SNR) behaviour of $L_0$. As mentioned in Section I, due to its low encoding complexity and the ability to achieve the optimal DMT of the channel, we focus our analysis on nested LAST codes, specifically LAST codes that are generated using construction A which is described below (see \cite{Label9}). 

We consider the Loeliger ensemble of mod-$p$ lattices, where $p$ is a prime. First, we generate the set of all lattices given by
$$\Lambda_p=\kappa (\mathsf{C}+p\mathbb{Z}^{2MT})$$
where $p\rightarrow \infty$, $\kappa\rightarrow 0$ is a scaling coefficient chosen such that the fundamental volume $V_f=\kappa^{2MT}p^{2MT-1}=1$, $\mathbb{Z}_p$ denotes the field of mod-$p$ integers, and $\mathsf{C}\subset\mathbb{Z}_p^{2MT}$ is a linear code over $\mathbb{Z}_p$ with generator matrix in systematic form $[\pmb{I}\;\pmb{P}^\mathsf{T}]^\mathsf{T}$. We use a pair of self-similar lattices for nesting. We take the shaping lattice to be $\Lambda_s=\phi\Lambda_p$, where $\phi$ is chosen such that the covering radius is $1/2$ in order to satisfy the input power constraint. Finally, the coding lattice is obtained as $\Lambda_c=\rho^{-r/2M}\Lambda_s$ to satisfy the transmission rate constraint $R(\rho)=r\log\rho$. Interestingly, one can construct a generator matrix of $\Lambda_p$ as (see \cite{Conway})
\begin{equation}
\pmb{G}_p=\kappa\begin{pmatrix}
\pmb{I} & \pmb{0}\\
\pmb{P} & p\pmb{I}
 \end{pmatrix},
\end{equation}
which has a lower triangular form. In this case, one can express the generator matrix of $\Lambda_c$ as $\pmb{G}=\rho^{-r/2M}\pmb{G}'$, where $\pmb{G}'=\phi\pmb{G}_p$. Thanks to the lower triangular format of $\pmb{G}$. If $\pmb{M}$ is an $m\times m$ arbitrary full-rank matrix, and $\pmb{G}$ is an $m\times m$ lower triangular matrix, then one can easily show that
\begin{equation}\label{matrix_kk}
\det[(\pmb{MG})_{kk}]= \det(\pmb{M}_{kk})\det(\pmb{G}_{kk}),
\end{equation}
where $(\pmb{MG})_{kk}$, $\pmb{M}_{kk}$, and $\pmb{G}_{kk}$, are the lower $k\times k$ part of $\pmb{MG}$, $\pmb{M}$, and $\pmb{G}$, respectively.

Using the above result, one can express the determinant that appears in (\ref{L0}) as
\begin{equation}
\det(\pmb{R}_{kk}^\mathsf{T}\pmb{R}_{kk})=\det(\pmb{M}_{kk}^\mathsf{T}\pmb{M}_{kk})\det(\pmb{G}_{kk}^\mathsf{T}\pmb{G}_{kk})=\rho^{-rk/2M}\det(\pmb{M}_{kk}^\mathsf{T}\pmb{M}_{kk})\det({\pmb{G}'}_{kk}^\mathsf{T}{\pmb{G}'}_{kk}).
\end{equation}
Let $\mu_1\leq \mu_2\leq\cdots\leq\mu_k$ be the ordered non-zero eigenvalues of $\pmb{M}_{kk}^\mathsf{T}\pmb{M}_{kk}$, for $k=1,\cdots,m$. Then,
$$\det(\pmb{M}_{kk}^\mathsf{T}\pmb{M}_{kk})=\prod\limits_{j=1}^k\mu_j.$$
 Note that for the special case when $k=m$ we have  $\mu_{2(j-1)T+1}=\cdots=\mu_{2jT}=1+\rho\lambda_j((\pmb{H}^c)^\mathsf{H}\pmb{H}^c)$, for all $j=1,\cdots,M$.

Denote $\alpha'_i=-\log\mu_i/\log\rho$. Using (\ref{matrix_kk}), one can asymptotically express $L_0$ as
\begin{equation}
L_0=m+(\log\rho)^{m/2}\sum\limits_{k=1}^m(\log\rho)^{k/2}\rho^{c_k},
\end{equation}
where
\begin{equation}
c_k={1\over 2}\sum\limits_{j=1}^k \left({r\over M} - \alpha'_j\right)^+.
\end{equation}
Now, since $c_k$ is non-decreasing in $k$, at the high SNR regime we have
\begin{equation}
L_0=m+(\log\rho)^{m}\rho^{c_m},
\end{equation}
where
$$c_m=
T\sum\limits_{i=1}^M \left(\displaystyle{r\over M}-(1-\alpha_i)^+\right)^+.$$

The asymptotic average of $L_0$ (averaged over channel statistics) when the channel is not in outage is given by
\begin{align*}
\mathsf{E}\{L_0(\pmb{H}^c\in\overline{\mathcal{O}})\}&=\int\limits_{\pmb{\alpha}\in\overline{\mathcal{O}}}L_0 f_{\pmb{\alpha}}(\pmb{\alpha})\;d\pmb{\alpha}\\
&=m+(\log\rho)^{m}\int\limits_{\pmb{\alpha}\in\overline{\mathcal{O}}}\exp\left(\log\rho\left[T\sum\limits_{i=1}^M\left({r\over M}-(1-\alpha_i)^+\right)^+-\sum\limits_{i=1}^M (2i-1+N-M)\alpha_i\right]\right)\;d\pmb{\alpha}\\
&=m+(\log\rho)^{m}\rho^{l(r)},
\end{align*}
where $\overline{\mathcal{O}}=\left\{\pmb{\alpha}\in\mathbb{R}_+^M: \sum_{i=1}^{M}(1-\alpha_i)^{+}\geq r\right\}$, and
\begin{equation}\label{l'_r}
l(r)=\max_{\pmb{\alpha}\in\overline{\mathcal{O}}} \left[T\sum\limits_{i=1}^M\left({r\over M}-(1-\alpha_i)^+\right)^+-\sum\limits_{i=1}^M (2i-1+N-M)\alpha_i\right].
\end{equation}
It is not so difficult to see that the optimal channel coefficients that maximize (\ref{l'_r}) are
$$\alpha_i^*=1, \quad \hbox{for }i=1,\cdots,M-k,$$
and
$$\alpha_i^*=0, \quad \hbox{for }i=M-k+1,\cdots,M,$$
i.e., the same $\pmb{\alpha}^*$ that achieves the optimal DMT of the channel.
%For the case that $r$ is not an integer, say $r\in(k,k+1)$, we have
%$$\alpha_i^*=1, \quad \hbox{for }i=1,\cdots,M-k-1,$$
%$$\alpha_i^*=0, \quad \hbox{for }i=M-k+1,\cdots,M.$$
%and
%$$\alpha^*_i=k+1-r.$$
Substituting $\pmb{\alpha}^*$ in (\ref{l'_r}), we get
\begin{equation}\label{exponent_MMSE}
l(r)={Tr(M-r)\over M}-(M-r)(N-r),
\end{equation}
for $r=0,1,\cdots,M$. In this case, the asymptotic minimum average computational complexity that is required by the decoder to achieve near-optimal performance, when the channel is not in outage, can be expressed as
\begin{equation}\label{UB_MMSE}
L_{\rm out}=2MT+(\log\rho)^{2MT}\rho^{l(r)}.
\end{equation}

Some interesting remarks can be drawn from the above analysis. Consider the case of a MIMO system with arbitrary $M$, and $N$. Assuming the use of an optimal random nested LAST code of codeword length $T\geq N+M-1$ and fixed rate $R$, i.e., $r=0$. In this case, one can see that $l(0)<0$ irrespective to the value of $T$ (i.e., the average complexity is asymptotically bounded for all $T$). It is clear that the term $(\log\rho)^{2MT}\rho^{-NM}$ decays quickly to $0$ as $\rho\rightarrow\infty$. The simulation results (introduced next) agree with the above analysis (see Fig.~1).

It is clear from the above analysis that, for a given multiplexing gain $0\leq r\leq M$, the sphere decoder has much lower asymptotic average computational complexity than the exhaustive ML decoder, where the latter has decoding complexity given by $2^{RT}=\rho^{rT}$. However, there exists a \textit{cut-off} multiplexing gain, say $r_0$, such that the average computational complexity of the sphere decoder remains bounded. This should not be a surprising result, since the sphere decoder is simply viewed as a search in a tree. It is well-known that the ultimate limit to tree search decoding, such as sequential decoding \cite{MGDC}, is the computational cut-off rate, a rate above which the average number of visited nodes in the tree (i.e., computations) is unbounded. Sphere decoding algorithms are no exception. Instead of the cut-off rate we define a \textit{cut-off multiplexing gain} for the outage-limited MIMO channel. This value can be easily found by setting $l(r_0)=0$ in (\ref{UB_MMSE}), which results in
$$r_0=\biggl\lfloor{MN\over M+T}\biggr\rfloor.$$

Interestingly, if we let the number of receive antennas $N\rightarrow \infty$, then (assuming $T=N+M-1$) one can achieve a cut-off multiplexing gain $r_0=M$ which is the maximum multiplexing gain achieved by the channel. This shows that one can dramatically improve the computational complexity of the decoder by increasing the number of antennas at the receiver side. Unfortunately, it is impossible for the sphere decoder to maintain very low decoding complexity while maintaining the maximal diversity (or the optimal tradeoff) that can be achieved by the channel, especially for the case of nested LAST codes discussed previously. For the case of MMSE-DFE sphere decoding, achieving the maximum diversity $MN$ requires the use of LAST codes with codeword lengths $T\geq N+M-1$. Increasing the number of receive antennas $N$ requires increasing $T$ as well, and hence, the second term in (\ref{UB_MMSE}) does not decay very quickly to zero. It turns out that the sphere decoder may achieve \textit{linear} computational complexity $m=2MT$ (the signal dimension) at high SNR for large enough number of antennas $N$ and fixed $T$, however at the expense of losing the maximum diversity $MN$ (or losing the optimal tradeoff).

\subsection{Sphere vs. Sequential Decoding}
A more efficient decoder that is capable of achieving good error performance with much lower decoding complexity compared to the sphere decoder is the so-called lattice sequential decoder \cite{MGDC},\cite{WD}. Such a decoder, inspired by the conventional sequential decoding algorithms such as the Fano and the Stack algorithms, provides excellent performance-complexity tradeoffs through the use of a decoding parameter called the bias. It has been shown in \cite{WD} that for a small fixed bias the average decoding complexity of the MMSE-DFE lattice sequential decoder is given by
\begin{equation}\label{UB_MMSE_seq}
L_{\rm sequential}=2MT+(\log\rho)^{MT}\rho^{l(r)},
\end{equation}
where $l(r)$ is as defined in (\ref{exponent_MMSE}). For a fixed rate $R$, i.e., for $r=0$, the ratio of the average complexity of both decoders, say $\gamma$, is given by
$$\gamma={L_{\rm sphere}\over L_{\rm sequential}}={2MT+(\log\rho)^{2MT}/ \rho^{MN}\over 2MT+(\log\rho)^{MT}/\rho^{MN}}.$$
It is clear from the above ratio that sequential decoding saves on average computational complexity at high SNR, especially for large signal dimensions. For example, consider the case of a $3\times 3$ LAST coded MIMO system with $T=5$ and fixed rate. At $\rho=10^3$ (30 dB), we have $\gamma\approx 31$, i.e., the sphere decoder's complexity is about 31 times larger than the complexity of the lattice sequential decoder. As will be shown in the sequel, simulation results agree with the above theoretical analysis. For $\rho < 30$ dB, one would expect the ratio $\gamma\gg 31$. For extremely high SNR values (e.g., $\rho\gg 30$ dB), it seems that $\gamma\rightarrow 1$ as $\rho\rightarrow\infty$. However, as will be shown next, the reduction in the computational complexity of the sequential decoder comes at the price of some performance loss compared to the sphere decoder. The performance loss increases as the codeword length $T$ increases. Hence, there is a tradeoff.

\section{Simulation Results}
\subsection{Tail Distribution \& Average Complexity}
The average computational complexity for the MMSE-DFE sphere decoder is plotted in Fig.~\ref{fig:C1_MMSE} for the case of LAST coded MIMO system with $M=N=2$, and $R=4$ bits per channel use (bpcu), for different values of codeword lengths $T=3,4,5$. The figure shows how the average number of computations decays very quickly to $m=2MT$ (the signal dimension) at high SNR, even for large values of $T$. Fig.~2 demonstrates the fact that the MMSE-DFE sphere decoder has a cut-off rate such that the average complexity of the decoder remains bounded as long as we operate below it. The figure shows that for fixed $M$, $N$, and $T$, if we increase the rate, the average complexity increases as well and may become unbounded even at high SNR.

\subsection{Performance-vs-Complexity}
An example of the performance-complexity tradeoff that results in using the lattice sequential decoder instead of the sphere decoder is depicted in Fig.~\ref{fig:m30}. Here, we consider a nested LAST coded $3\times 3$ MIMO channel with $T=5$ and $R=4$ bits per channel use. One can notice the amount of computations saved by the lattice sequential decoder for all values of SNR. For example, at $\rho=30$ dB, the average complexity of the sphere decoder is about 30 times the complexity of the lattice sequential decoder for an optimal LAST coded MIMO system with dimension $m=30$. This is achieved at the expense of some loss in performance ($\sim$0.6 dB).

 Finally, in Fig.~\ref{fig:m31}, we compare the error performance of several lattice decoders that achieve the optimal tradeoff of the channel including, the MMSE-DFE sphere decoder, the MMSE-DFE sequential decoder, and the lattice-reduction aided MMSE-DFE decoder for a $3\times 3$ LAST coded MIMO channel with $T=5$, and rate $R=12$ bpcu. It is clear from the figure that, although all of the these decoders are capable of achieving the maximal diversity $9$, lattice-reduction MMSE-DFE decoder encounters a big loss in performance ($\sim 2.7$ dB) compared to the MMSE-DFE sphere decoder. For larger values of $T$, the SNR gap between the those decoders will become wider. Therefore, in order to achieve very close to the outage performance, one need to resort to a more reliable decoders such as the sphere decoder.

\section{Summary}
In this paper, we have provided a complete analysis for the computational complexity of the MMSE-DFE sphere decoder applied to the LAST coded MIMO channel, at the high SNR regime. An upper bound of the asymptotic complexity distribution has been derived. It has been shown that if the number of computations performed by the decoder exceeds a certain limit, the complexity's tail distribution becomes upper bounded by the asymptotic outage probability achieved by the LAST coding and sphere decoding schemes. As a result, the tradeoff of the MIMO channel is naturally extended to include the decoder's complexity. The average number of computations that is required to terminate the search when the channel is not in outage has been calculated in terms of the system parameters ($\rho$,$M$,$N$,$T$). In order to achieve high order diversity, the number of antennas and the codeword length must be increased simultaneously, causing the complexity of the decoding to increase. As expected, MMSE-DFE preprocessing significantly improves the overall computational complexity of the underlying decoding scheme. However, it has been shown that there exists a cut-off multiplexing gain for which the average complexity remains bounded. Finally, the performance-complexity tradeoff achieved by several decoders that perform lattice decoding have been discussed.

%+Bibliography

%-Bibliography

%\begin{figure}[ht!]
%\center
%\includegraphics[width=3.5in]{Upper_Bound.pdf}
%\caption{A geometrical approach for upper bounding the complexity distribution. Spheres of radius $R_s$ centered at the lattice points $\pmb{x}\in\Lambda(\pmb{MG})$ are presented in dashed lines. The doted line represents the decoder's search sphere centerd at the received signal $\pmb{y}$ of radius $R_s$.}
%\end{figure}

%\begin{figure}[htbp]
%\center
%\includegraphics[width=6in]{Aler_N1.pdf}
%\caption{$(a)$ Performance and $(b)$ complexity distribution achieved by the naive sphere decoder for the case of 2$\times 2$ LAST coded MIMO channel.}
%\label{fig:Aler_N1}
%\end{figure}
%\begin{figure}[htbp]
%\center
%\includegraphics[width=6in]{Aler_11.pdf}
%\caption{$(a)$ Performance and $(b)$ complexity distribution achieved by the MMSE-DFE sphere decoder for the case of 2$\times 2$ LAST coded MIMO channel.}
%\label{fig:Aler_11}
%\end{figure}

\begin{figure}[htbp]
\center
\includegraphics[width=6in]{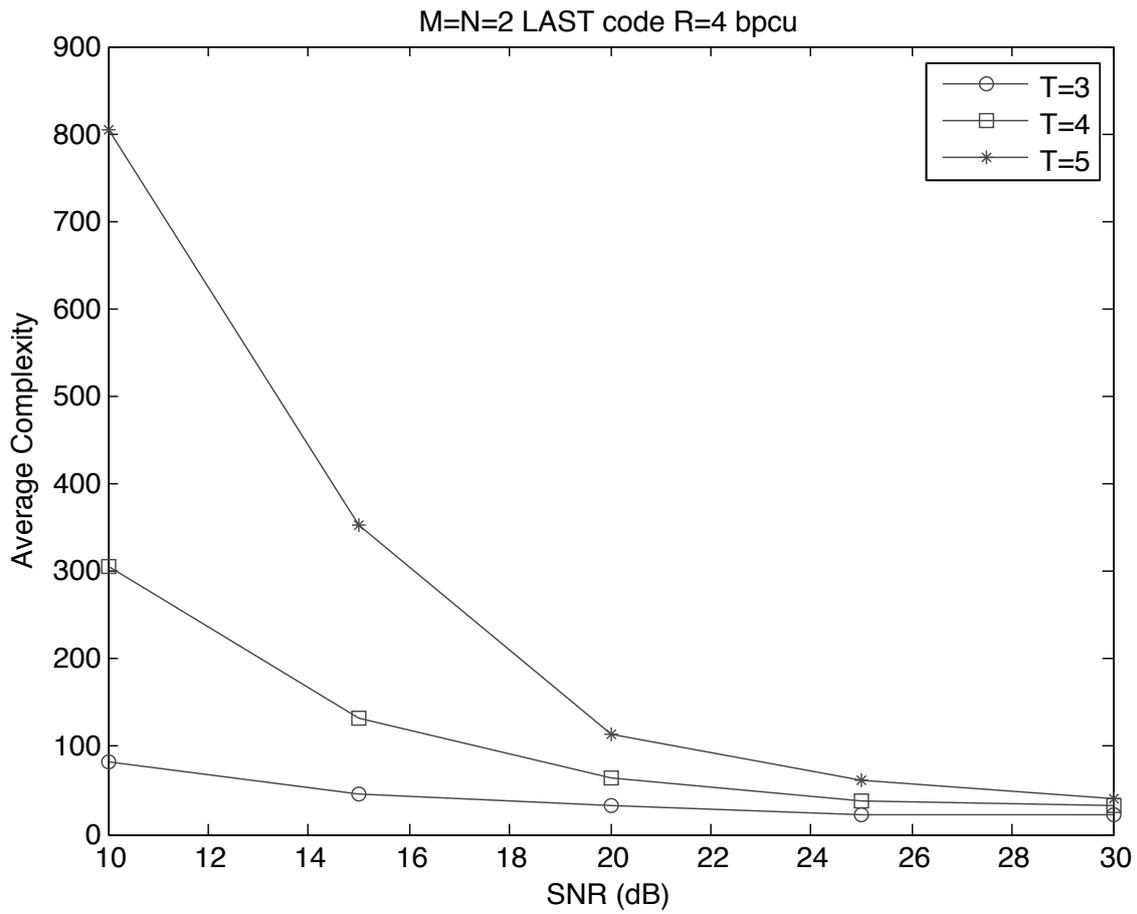}
\caption{The reduction in computational complexity achieved by the MMSE-DFE lattice decoder for all values of $T$ that achieve maximum diversity $4$. All curves decays quickly to $m=2MT=4T$ at high SNR. }
\label{fig:C1_MMSE}
\end{figure}
%
%\begin{figure}[htbp]
%\center
%\includegraphics[width=6in]{C_Naive_1}
%\caption{The computational complexity achieved by the naive lattice decoder for values of $T=1$ and $T=2$, that achieve maximum diversity $2$.}
%\label{fig:C_Naive_1}
%\end{figure}
%%
%\begin{figure}[htbp]
%\center
%\includegraphics[width=6in]{C_Naive_2}
%\caption{The computational complexity achieved by the naive lattice decoder for values of $T=3$.}
%\label{fig:C_Naive_2}
%\end{figure}

\begin{figure}[ht!]
\center
\includegraphics[width=6in]{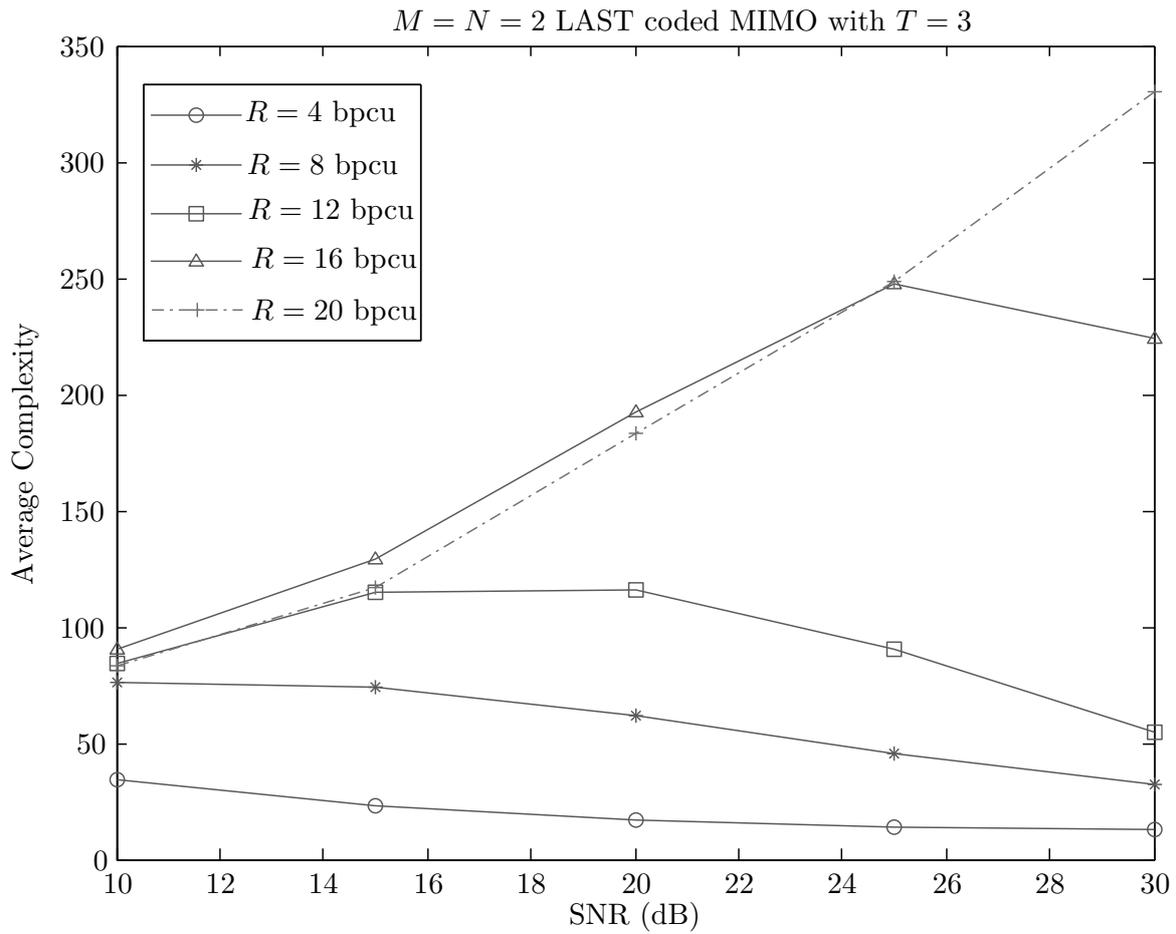}
\caption{Plots of the average complexity of the MMSE-DFE sphere decoder for an optimal nested LAST coded $2\times 2$ MIMO system with different rates $R$ in bpcu.}
\end{figure}

\begin{figure}[htbp]
\center
\includegraphics[width=7in]{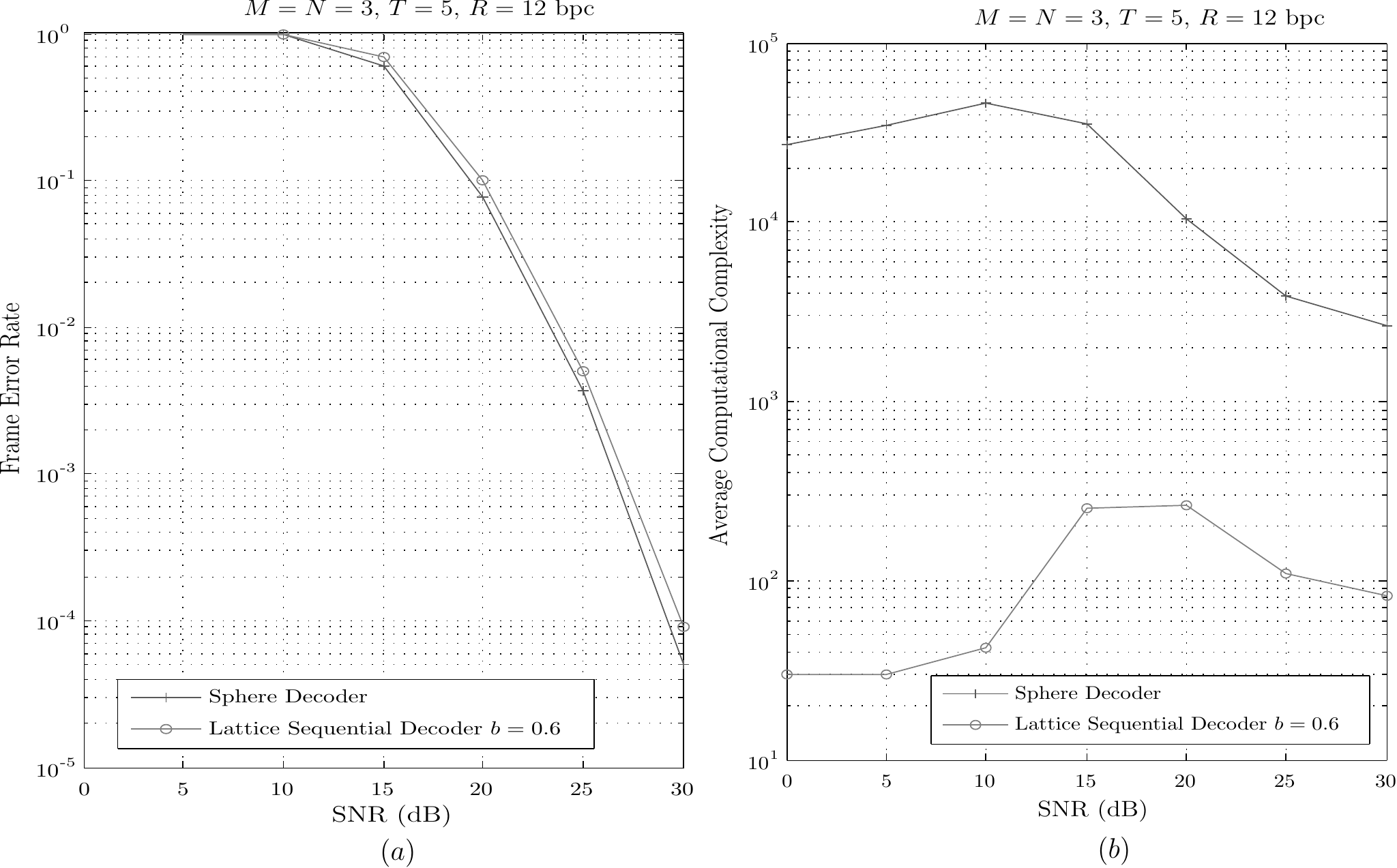}
\caption{$(a)$ Performance and $(b)$ average computational complexity comparison between sphere decoding and lattice sequential decoding for a signal with dimension $m=30$.}
\label{fig:m30}
\end{figure}

\begin{figure}[htbp]
\center
\includegraphics[width=5in]{}
\caption{Performance comparison between several lattice decoders that achieve the optimal DMT of the channel for a LAST coded MIMO system with dimension $m=30$.}
\label{fig:m31}
\end{figure}

\begin{thebibliography}{99}
\bibitem{Conway} J.~H.~Conway and N.~J.~A.~Sloane, \textit{SpherePackings, Lattices, and Groups}, 3rd ed. Springer Verlag NewYork, 1999.

\bibitem{Label10} E.~Agrell, T.~Eriksson, A.~Vardy, and K.~ Zeger, ``Closest Point Search in Lattices'', \textit{IEEE Trans. on Information Theory}, vol.~48, pp.~2201-2214, Aug.~2002.

\bibitem{Label4} M. O. Damen, H. El Gamal, G. Caire, ``On Maximum-Likelihood detection and the search for the closest lattice point'', \textit{IEEE Transaction on Information Theory}, vol. 49, pp.~2389-2401, Oct.~2003.
    
\bibitem{MGDC} A. Murugan, H. El Gamal, M. O. Damen and G. Caire, ``A unified framework for tree search decoding: rediscovering the sequential decoder'', \textit{IEEE Trans. Inform.~ Theory}, vol.~52, no.~3, March 2005.
    
\bibitem{WD} W. Abediseid, M. O. Damen, ``Lattice Sequential Decoder for Coded MIMO Channel: Performance and Complexity Analysis'' \textit{Submitted to Trans. on Inform. Theory}, Dec. 2010. Availabel on arxiv http://arxiv.org/pdf/1101.0339

\bibitem{EJ} J. Jald\'{e}n, P. Elia, ``DMT Optimality of LR-Aided Linear Decoders for a General Class of Channels, Lattice Designs, and System Models'', \textit{IEEE Trans. on Inform. Theory}, vol. 56, no. 10, Oct. 2010.

\bibitem{JElia} J. Jald\'{e}n, P. Elia, ``Sphere decoding complexity exponent for decoding full rate codes over the quasi-static MIMO channel'' \textit{Submitted to Trans. on Inform. Theory}, Feb. 2011. Available on arxiv http://arxiv.org/pdf/1102.1265

\bibitem{HV} B. Hassibi and H. Vikalo, ``On the sphere decoding algorithm: Part I, the expected complexity'', \textit{IEEE Transactions on Signal Processing}, vol 53, no 8, pages 2806-2818, Aug 2005.

\bibitem{HV2} B. Hassibi and H. Vikalo, ``On the sphere decoding algorithm: Part II, generalization, second-order statistics, and applications to communication'', \textit{IEEE Transactions on Signal Processing}, vol 53, no 8, pages 2819-2834, Aug 2005.

\bibitem{JB} J. Jalden, B. Ottersten, ``On the complexity of sphere decoding in digital communication'', \textit{IEEE Trans. Signal Processing}, vol. 53, no. 4, pp. 1474-1484, Apr. 2005.

\bibitem{SJ} D. Seethaler, J. Jald\'{e}n, C. Studer, and H. B\"{o}lcskei, ``On the complexity distribution of sphere-decoding'', \textit{IEEE Transactions on Information Theory}, Dec. 2009, submitted.

\bibitem{BK} A. H. Banihashemi and A. K. Khandani, ``On the complexity of decoding lattices using the Korkin-Zolotarev reduced basis'', \textit{IEEE Trans. Info. Theory}, vol. 44, no. 1, pp. 162-171, Jan. 1998.

\bibitem{ZT} Zheng and D. Tse, ``Diversity and multiplexing: A fundamental tradeoff in multiple antenna channels'', \textit{IEEE Trans. Inf. Theory}, vol. 49, no. 5, pp. 1073-1096, May 2003.

\bibitem{GC} H. El Gamal, G. Caire, and M.~O.~Damen, ``Lattice coding and decoding achieve the optimal diversity-multiplexing tradeoff of MIMO channels'', \textit{IEEE Trans. Information Theory}, vol. 50, no. 6, pp. 968-985, June 2004.

\bibitem{EZ2} U.~Erez, S.~Litsyn, and R.~Zamir, ``Lattices which are good for (almost) everything,'' \textit{IEEE Trans.~Inform.~Theory}, vol.~51, no.~10, pp.~3401-3416, Oct.~2005.

\bibitem{DG} M. O. Damen, H. El Gamal and G. Caire, ``MMSE-GDFE Lattice Decoding for Solving Under-determined Linear Systems with Integer Unknowns'', \textit{Inter.~Symp.~on Inform.~Theory}, June 2004.

\bibitem{Label9} H. Loeliger, ``Averaging Bounds for Lattices and Linear Codes'', \textit{IEEE Trans. Inf. Theory}, vol. 43, no. 6, pp. 1767-11773, Nov.~1997.


\end{thebibliography}
\end{document}